\documentclass[12pt,titlepage]{article}
\usepackage{amsmath,amsthm,amssymb}
\usepackage[dvips]{graphicx}
\usepackage{array}
\usepackage{booktabs,arydshln}
\usepackage[mathscr]{eucal}
\usepackage{eurosym}
\usepackage{subfigure}
\usepackage{color}
\usepackage{listings}
\usepackage{multirow} 
\usepackage{algorithm}
\usepackage{algpseudocode}
\usepackage{xcolor}
\definecolor{darkgreen}{rgb}{0,0.5,0}

\newtheorem{Theorem}{Theorem}[section]

\newtheorem{Proposition}[Theorem]{Proposition}
\newtheorem{Property}[Theorem]{Property}

\theoremstyle{definition}
\newtheorem{Definition}[Theorem]{Definition}
\newtheorem{Example}[Theorem]{Example}
\theoremstyle{remark}
\newtheorem{Remark}[Theorem]{Remark}
\numberwithin{equation}{section}
\numberwithin{figure}{section}
\numberwithin{table}{section}

\newcommand{\argmin}{\operatornamewithlimits{argmin}}

\newcommand{\Var}{\mathrm{Var}}
\newcommand{\Esp}{\mathrm{E}}

\newcommand{\lcx}{\preceq_{\mathrm{cx}}}

\newcommand{\bi}{ \begin{itemize}  }
\newcommand{\ei}{\end{itemize}}

\newcommand{\bfor}{ \begin{eqnarray*} }
\newcommand{\efor}{\end{eqnarray*}}

\newcommand{\Xvec}{\boldsymbol{X}}

\newcommand{\xvec}{\boldsymbol{x}}

\begin{document}

\title{BALANCE AND FAIRNESS THROUGH MULTICALIBRATION IN NONLIFE INSURANCE PRICING}
\author{
Michel Denuit\\
Institute of Statistics, Biostatistics and Actuarial Science\\
Louvain Institute of Data Analysis and Modeling\\
UCLouvain\\
Louvain-la-Neuve, Belgium
\\[2mm]
Marie Michaelides\\
School of Mathematical and Computer Sciences\\
Actuarial Mathematics and Statistics\\
Heriot-Watt University\\
Edinburgh, United Kingdom
\\[2mm]
Julien Trufin\\
Department of Mathematics\\
Universit\'e Libre de Bruxelles (ULB)\\
Brussels, Belgium
}

\maketitle

\begin{abstract}
Autocalibration is known to be an important requirement for insurance premiums since it guarantees
that premium income balances corresponding claims, on average, not only at portfolio level but also inside each group paying similar premiums. Also, fairness has become a major concern because unfair treatment may expose insurers to lawsuits or reputational damage. Translating fairness into conditional mean independence allows actuaries to combine autocalibration and fairness into
the multicalibration concept. This paper studies the properties of multicalibration in an insurance
context and proposes practical ways to implement it, through local regression or bias correction within
groups including credibility adjustments. A case study based on motor insurance data illustrates the relevance of multicalibration in insurance pricing.
\\[3mm]
{\bf Keywords:} Insurance pricing, autocalibration, fairness, sufficiency, multicalibration. 
\end{abstract}

\section{Introduction and motivation}

Financial equilibrium is an important property in insurance pricing as it ensures that premiums
match corresponding claims, on average. This property must hold at portfolio level (global
balance) but also within meaningful groups of policyholders.
Numerical evidence suggests that candidate premiums produced by machine learning tools, like boosted regression
trees or neural networks, can depart substantially from observed claim totals. See, e.g., Denuit et al. (2019) and the references
therein. Autocalibration has been proposed as a remedy by Denuit et al. (2021), Denuit and Trufin (2023, 2024),
Lindholm et al. (2023), W\"uthrich (2023) and W\"uthrich and Ziegel (2024) since it ensures that
the amounts collected from policyholders paying the same premium match the
corresponding claim totals, on average.

Besides financial equilibrium, discrimination is another important issue in the private insurance industry.
We refer the reader, e.g., to Frees and Huang (2023) and Charpentier (2024) for a general overview of the topic.
The ban of gender-based discrimination within the European Union (EU)
is emblematic in that respect. In 2011, the European Court of Justice 
concluded that any gender-based insurance discrimination must be prohibited.
From December 21, 2012, all insurers operating in the EU are required to offer unisex premiums and benefits. 
In North-America, race is another example of protected feature in several states and provinces.
The protected attributes listed in the EU Directives are particularly relevant for insurers operating
in the EU. In addition to gender, this list includes some other features that are often used in insurance
like age, health or disability status, for instance. As another example, Fr\"ohlich and Williamson (2024) consider
``having migrant background'' as sensitive feature in their study of fair machine learning techniques.

The EU directive banning the use of gender in insurance risk classification imposes individual fairness.
In general, this means that the amounts of premium charged to policyholders differing only
on the prohibited feature must be equal. In this case, the sensitive feature
is not included in the risk classification scheme and commercial premiums do not depend on the
sensitive feature in an explicit way. However, indirect, or proxy discrimination may remain present
when some rating factors included in the price list are correlated to the omitted sensitive feature.
For instance, motor insurance premiums often differ according to the power of the vehicle or the distance traveled. If these features are correlated to gender, because men tend to drive more powerful cars over longer distances for example, then insurance premiums still indirectly discriminate according to gender. 
EU guidelines precisely consider this case, as explained later on.

This paper considers the case of a sensitive feature not subject to individual fairness by the law.
The insurer may thus let commercial premiums vary according to the sensitive feature. Typical examples
include protected variables like health or disability status that are used by insurers or proxy variables
to a forbidden rating factor. Because the use of these features may expose the insurer to complaints
from consumers organizations or formal investigation by the regulatory authorities, it
is important to be able to prove that the price list is fair with respect to the
sensitive feature under scrutiny, in some sense to be defined.

There are several approaches to fairness in the insurance and in the machine learning literature.
Let us mention actuarial fairness, individual fairness (or fairness through unawareness after
Dwork et al., 2012), counterfactual fairness (Kusner et al., 2017), and
several group-fairness criteria (independence, separation, and sufficiency after Barocas et al., 2019).
In this paper, we follow Baumann and Loi (2023) who advocate the relevance of a weak version of sufficiency
in insurance pricing. This group-fairness notion imposes conditional mean independence of the response and the sensitive feature, given the premium. Being based on conditional mean independence, sufficiency
is intuitively appealing and nicely combines with autocalibration, resulting in the concept
of multicalibration introduced by Hebert-Johnson et al. (2018). In words, a premium is
multicalibrated with respect to a collection of groups partitioning the portfolio if it
is autocalibrated within each group, in addition to being autocalibrated overall. The groups
considered in this paper are formed based on the sensitive feature.
The reinforcement of sufficiency into multicalibration is mentioned
as an avenue for future research in Section 3.3 in Baumann and Loi (2023) while calibration
conditional on gender and on salary has been implemented by Fissler et al. (2022)
in their application to workers' compensation insurance. This paper develops
these ideas and bridges fairness to autocalibration.

The remainder of this paper is organized as follows. The insurance ratemaking problem
is formalized in Section 2. Section 3 recalls the autocalibration concept.
Section 4 is devoted to fairness notions. The group-fairness concept based on conditional
mean independence considered in this paper is introduced there. Section 5 unites autocalibration
fairness through multicalibration while Section 6 explains how to implement multicalibration
through multibalance correction. Numerical illustrations are given in Section 7.
The final Section 8 summarizes the main contribution of this paper and discusses the results.

\section{Nonlife insurance pricing}

The machine learning literature mainly considers binary outcomes and algorithmic decision-making systems.
Typical examples include criminal risk assessment, hiring, college admission, lending or social services
intervention, for instance, corresponding to ``yes-no'' situations. A binary response is useful at underwriting stage in insurance where the insurer decides to cover the risk or to refuse it.
However, claim-related responses (claim numbers, claim severities, and claims totals) enter the
analysis when we consider insurance pricing and these responses are not binary (integer, continuous or mixed nature).

In this paper, we consider a general claim-related response $Y\geq 0$.
To fix the ideas, it is useful to consider that $Y$ corresponds to the annual claim frequency
for a policyholder in the portfolio. In accordance with credibility theory, we denote as $M$
the true mean value of $Y$, such that
$$
\Esp[Y|M=\mu]=\mu.
$$
The random variable $M$ is assumed to have finite expectation. Notice that the last identity can
be rewritten as $\Esp[Y|M]=M$, almost surely. Throughout the paper, every equality between
random variables is assumed to hold with probability 1, or almost surely.

The true mean $M$ remains unobserved and purely conceptual but is nevertheless useful to formalize
the insurance ratemaking problem where $M$ corresponds to the pure premium that should
be charged by the insurer. This is because individual departures $Y-M$ cancel out in any large
insurance portfolio where a law of large numbers applies. Charging $M$ corresponds to actuarial
fairness, which is practically impossible because $M$ cannot be observed. Contrary to algorithmic
fairness problems, risk classification in insurance targets $M$ which cannot be observed at
individual level. At the collective level, the distribution of $M$ reflects the composition of the
insured population targeted by the insurance company under consideration.

Since $M$ is unknown, the insurance company can only use information recorded in its database
about each policyholder. This corresponds to the set of features $X_1,\ldots,X_p$ gathered in the 
random vector $\Xvec$. Here, the information $\Xvec$ is used as a substitute to $M$ so that we assume
that $Y$ and $\Xvec$ are independent given $M$.
Actuarial pricing then aims to evaluate the pure premium as accurately as possible, based on $\Xvec$.
This means that
the target is the conditional expectation $\mu(\Xvec)=\Esp[Y|\Xvec]$ of the response $Y$
given the available information $\Xvec$.
Henceforth, $\mu(\Xvec)$ is referred to as the best-estimate price after Lindholm et al. (2022).
Notice that the conditional independence assumption allows us to write
$$
\mu(\Xvec)=\Esp\big[\Esp[Y|\Xvec,M]\big|\Xvec\big]=\Esp[M|\Xvec]
$$
so that $\mu(\Xvec)$ also approximates the individual premium $M$ based on the information available
to the insurance company.

The unknown function $\xvec\mapsto\mu(\xvec)=\Esp[Y|\Xvec=\xvec]$ is
approximated by a technical premium $\xvec\mapsto \pi_0(\xvec)$.
The technical premium is the most accurate approximation to $\mu(\Xvec)$, for internal use, only.
It allows the insurance company to compute policyholder value, for instance, by comparing
the anticipated cost $\pi_0(\Xvec)$ to the actual revenues generated by the contract.
These revenues correspond to the commercial premium $\pi(\Xvec)$ actually charged
for coverage (excluding taxes, commissions and general expenses). The premium structure $\pi(\cdot)$
is generally simplified compared to $\pi_0(\cdot)$, dropping some features or using them at a less
granular level.

Henceforth, we assume that $\pi_0(\Xvec)$, $\pi(\Xvec)$, and $\mu(\Xvec)$ are continuous random variables
admitting positive probability density functions over $(0,\infty)$.
This eases exposition and is generally the case when there is at least one continuous feature contained in the
available information $\Xvec$ and the functions $\pi_0(\cdot)$ and $\pi(\cdot)$ are continuously increasing functions of real scores built from $\Xvec$. Typically, a log-link is used and
$\pi_0(\xvec)=\exp(\mathrm{score}_0(\xvec))$
where $\mathrm{score}_0(\xvec)$ is obtained from boosting, random forests or neural networks. Also,
$\pi(\xvec)$ is the exponential of an additive score produced by a Generalized Additive Model for transparency
and to comply with the multiplicative premium structure implemented in the majority of insurance
commercial IT systems. 
Notice that we assume throughout the paper that commercial premiums vary
according to rating factors related to risk, only, without non-risk based corrections
(like price walking, for instance).
For simplicity, we assume that the coverage period is 1 year. Exposure can be included in the analysis to account for shorter coverage periods, if needed, as is done in Section 7 for the numerical illustrations.

Fairness is meaningful only for the commercial premium $\pi(\Xvec)$.
To figure out the situation, it is useful to think about the example of gender within EU, which never
enters the commercial premium since its use is prohibited but which generally
impacts on the technical premiums. The commercial premium can be regarded as
the decision made by the insurance company about the policyholder while the
technical premium corresponds to the prediction, as defined in Scantamburlo et al. (2025).
Since $\pi_0(\cdot)$
is for internal use, only, there is no need for it to be fair. On the contrary, it must be as
accurate as possible so that the insurance company is able to counteract adverse selection
and to monitor premium transfers induced within the portfolio by the application of the
commercial premiums. Precisely, any departure $\pi(\Xvec)-\pi_0(\Xvec)$ corresponds to a
transfer to (if positive) or from (if negative) another risk profile.

The paper concentrates on $\pi(\cdot)$ for which financial equilibrium and fairness
both make sense. It is nevertheless worth mentioning that the fairness notion studied in this
paper improves the performances of any candidate premium. Hence, it is also meaningful
for technical premiums to improve accuracy as measured by any Bregman loss, including
deviance.

\section{Autocalibration}

In general, autocalibration ensures that the average response in a neighborhood defined 
with the help of the autocalibrated
predictor under consideration matches the predictor value. See, e.g., Kr\"uger and Ziegel (2021).
Under autocalibration, the amounts collected from policyholders paying the same
premium match the corresponding claim totals, on average. Subsidizing transfers between policyholders are
thus avoided, helping to prevent adverse selection effects.

This concept is formally defined next, in the context of insurance.

\begin{Definition}
The commercial premium $\pi(\cdot)$ is said to be autocalibrated if 
\begin{equation}
\label{DefAutoCal}
\Esp[Y|\pi(\Xvec)=p]=p\text{ for all premium amount }p.
\end{equation}
\end{Definition}

Autocalibration is an important property in insurance pricing because 
it ensures that every group of policyholders paying the same premium is on average self-financing.
In other words, \eqref{DefAutoCal} guarantees that the premium paid by policyholders charged
the same amount of premium $p$ exactly covers the expected claims of that group. 

Notice that \eqref{DefAutoCal} can be equivalently rewritten as
$$
\Esp[Y-\pi(\Xvec)|\pi(\Xvec)=p]=\Esp[\mu(\Xvec)-\pi(\Xvec)|\pi(\Xvec)=p]=0\text{ for all premium amount }p.
$$
This shows that the net cash-flows $Y-\pi(\Xvec)$ as well as the pricing errors $\mu(\Xvec)-\pi(\Xvec)$
cancel out within groups of policyholders paying the same amount of premium, on average.
Autocalibration thus appears to be a very appealing requirement for candidate premiums.
From a conceptual point of view, \eqref{DefAutoCal} can also be written in terms of the true mean $M$
as $\Esp[M|\pi(\Xvec)]=\pi(\Xvec)$ which ensures that the commercial premium also
captures the premium that should be charged, on average.
Notice that we can equivalently condition with respect to the score in \eqref{DefAutoCal}
when $\pi(\Xvec)$ is obtained by exponentiating an additive score.

\section{Fairness notion}

\subsection{Sensitive feature}

Assume now that there is a sensitive feature $S$ recorded in the database. Its use
is allowed but requires particular attention because it exposes the insurer to
possible complaints about unfair treatment.
In this paper, we assume that $S=X_{j_0}$ for some $j_0\in\{1,2,\ldots,p\}$ to ease exposition;
see Remark \ref{SXj0} below for a discussion.
This means that the insurer actually uses $S$ for premium calculation
but wants to demonstrate that it is used in a fair and responsible way. Typical examples for $S$
include sensitive features like health or disability status, or socio-economic profile, for instance.
In the situation where individual fairness is imposed by the law,
like with gender in the EU for instance, $S$ may correspond to a feature which is correlated
to the prohibited one, as shown in the next example.

\begin{Example}
Considering the ban of gender-based discrimination in the EU, the individual fairness
imposed by the law means that the premium $\pi(\cdot)$ must fulfill the constraint
$\pi(\xvec,\text{man})=\pi(\xvec,\text{woman})=\pi(\xvec)$ for all risk profiles $\xvec$.
In words, this means that a man and a woman
with the same rating factors $\xvec$ must pay the same premium. If this condition is not
fulfilled then direct discrimination results.
This requirement is thus similar to the notions of ``fairness through unawareness'' or ``blindness'',
except that the law partly prohibits the use of proxies. According to  
Article 17 of the ``Guidelines on the application of Council Directive 2004/113/EC
to insurance, in the light of the judgment of the Court of Justice of the European Union in Case
C-236/09 (Test-Achats)'', the use of true risk factors that might be correlated with gender
remains permitted. Examples are given to illustrate the extent of the prohibition. In a nutshell,
the insurer is not allowed to include in its price list a feature without impact on claims but correlated to
gender. Hence, the use of proxy variables as substitutes for gender remains permitted, even if
it generates indirect discrimination. Here, $S$ would be a rating factor $X_{j_0}$ strongly correlated
to gender while being a true risk factor if the insurance company wishes to demonstrate that it
complies not only with the letter but also with the spirit of the anti-discrimination law.
\end{Example}

\begin{Remark}
\label{SXj0}
Even if the paper concentrates on the case $S=X_{j_0}$ for some $j_0\in\{1,2,\ldots,p\}$, the sensitive feature $S$ might also result from a combination
of several features in $\Xvec$, like disability and area, for instance, to ensure equal access to insurance for disabled
people whatever their living area. 
A similar analysis can be performed in these more complicated cases where $S=\psi(\Xvec)$ for some
given function $\psi$. Notice that $S$ may even be another prediction or premium.
For instance, $\psi(\Xvec)$ could be the prediction of policyholder's gender based on the rating
factors $\Xvec$.
\end{Remark}

\begin{Remark}\label{Rem:S-not-in-Xvec}
The case where $S$ is not contained in, or deduced from $\Xvec$ is not relevant for practice
under the fairness notion considered in this paper.
Indeed, this situation would mean either that the use of $S$ is prohibited by the law or that the
insurer refuses to let premiums depend on $S$. If we except the case where $Y$ and $S$
are independent given $\pi(\Xvec)$, any fair premium (in the sense of multicalibration
introduced below) necessarily depends on $S$
so that it conflicts with the ban imposed by the law or the commercial decision by the insurer.
\end{Remark}

\subsection{Sufficiency}

While individual fairness requires that individuals who only differ with respect to the prohibited feature
must be treated equally, group fairness is a global notion resulting from the comparison of
groups defined by the sensitive feature $S$. 
Group fairness requires that benefits or harms resulting from the adoption of a pricing structure
$\pi(\cdot)$ are distributed fairly across these groups, where the fair distribution is defined from
an appropriate group-fairness criterion.
Classical group-fairness criteria include independence, separation and sufficiency. See, e.g.,
Baumann and Loi (2023) and Charpentier (2024) for a discussion in the context of insurance.

In general, there is no consensus about the fairness criterion and different contexts
may call for different criteria.  
Moreover, fairness criteria are often incompatible so that one criterion must be selected.
In accordance with the
analysis conducted by Baumann and Loi (2023), sufficiency appears
to be an appropriate fairness condition in the context of private insurance. 
In its strict sense, sufficiency corresponds to conditional independence
of $Y$ and $S$ given $\pi(\Xvec)$. This is a rather strong requirement that rules out demographic
parity which requires independence of $\pi(\Xvec)$ and $S$.
According to Charpentier (2024, Proposition 9.3), if a candidate premium $\pi(\cdot)$ satisfies
independence and sufficiency with respect to $S$ then $Y$ and $S$ are independent.
Therefore, unless $S$ does not influence $Y$, no premium $\pi(\cdot)$ can simultaneously fulfill
demographic parity and sufficiency.

In this paper, we consider the weaker version of sufficiency retained by Baumann and Loi (2023),
defined with the help of conditional mean independence of $Y$ and $S$ given $\pi(\Xvec)$.
The conditional mean independence concept is used in econometrics to decide whether additional
features are needed in a regression model. 
It turns out that it also possesses an intuitive interpretation in the context of insurance premium.
This leads to the following definition.

\begin{Definition}
A premium $\pi(\cdot)$ is said to be fair with respect to a sensitive feature $S=X_{j_0}$
valued in $\mathcal{S}$ if
\begin{equation}
\label{DefCMI}
\Esp[Y|\pi(\Xvec)=p,S=s]=\Esp[Y|\pi(\Xvec)=p,S=s']\text{ for all $p$ and $s,s'\in\mathcal{S}$}.
\end{equation}
\end{Definition}

The requirement \eqref{DefCMI} is intuitively appealing. It just means that the expected benefits are equal
across groups created by $S$ provided the same amount of premium is paid. This argument may be used
by the insurer as a proof that the use of $S$ in premium calculation is responsible and protects
policyholders' rights because the return from the insurance operation (measured by the amount $Y$
paid by the insurer) is the same, on average, as long as the premium is identical.

If $Y\in\{0,1\}$ then \eqref{DefCMI} corresponds to the conditional independence of $Y$ and $S$
given $\pi(\Xvec)$. However, for more general responses like claim frequencies and severities,
\eqref{DefCMI} is a weaker requirement. Conditional mean independence can be tested from data. 
The reader is referred to Lundborg (2022) for a survey and to Zhang et al. (2025) for recent developments
on testing procedures.

\section{Multicalibration}

\subsection{Definition}

Given that autocalibration is a central concept in insurance pricing, rooted in financial
equilibrium, it is thus natural to combine it with conditional mean independence \eqref{DefCMI} retained as
group-fairness notion. It turns out that this corresponds to the
concept of multicalibration proposed in the machine learning literature by Hebert-Johnson et al. (2018).
This is formally defined next.

\begin{Definition}
The commercial premium $\pi(\cdot)$ is said to be multicalibrated with respect to a sensitive feature $S=X_{j_0}$
valued in $\mathcal{S}$ if
\begin{equation}
\label{DefMultiCal}
\Esp[Y|\pi(\Xvec)=p,S=s]=\Esp[Y|\pi(\Xvec)=p,S=s']=p\text{ for all $p$ and $s,s'\in\mathcal{S}$}.
\end{equation}
\end{Definition}

Compared to \eqref{DefCMI}, we see that  \eqref{DefMultiCal} not only imposes the equality of conditional
expectations, but also that they coincide with the premium amount $p$ involved in the conditions.
Notice that \eqref{DefMultiCal} can be equivalently rewritten as
$$
\Esp[Y-\pi(\Xvec)|\pi(\Xvec)=p,S=s]=0\text{ for all $p$ and $s\in\mathcal{S}$}.
$$
The net losses $Y-\pi(\Xvec)$ thus average out in any group of policyholders paying the same
amount of premium whatever the value $s$ of the sensitive feature $S$. This expresses the
compensation of claims with premiums in these groups and recalls the balance equations
at the heart of the method of marginal totals.
Also, \eqref{DefMultiCal} can be rewritten as
$$
\Esp[\mu(\Xvec)-\pi(\Xvec)|\pi(\Xvec)=p,S=s]=0\text{ for all $p$ and $s\in\mathcal{S}$}.
$$
This shows that the pricing errors $\mu(\Xvec)-\pi(\Xvec)$
cancel out on average within groups of policyholders paying the same amount of premium
and having the same value $s$ of the sensitive feature $S$.
Multicalibration thus appears to be a very appealing requirement for commercial premiums.

\subsection{Properties}

Let us first establish that multicalibration reinforces autocalibration.

\begin{Property}
\label{PropAutoMulti}
\begin{enumerate}
\item[(i)]
Any multicalibrated premium $\pi(\cdot)$ with respect to $S=X_{j_0}$ is also autocalibrated.
\item[(ii)]
If $\pi(\cdot)$ is autocalibrated then conditional mean independence \eqref{DefCMI}
implies multicalibration with respect to $S=X_{j_0}$.
\end{enumerate}
\end{Property}
\begin{proof}
For any multicalibrated premium, we have
$$
\Esp[Y|\pi(\Xvec)]=\Esp\big[\underbrace{\Esp[Y|\pi(\Xvec),S]}_{=\pi(\Xvec)}\big|\pi(\Xvec)\big]=\pi(\Xvec)
$$
so that it is also autocalibrated, as announced under item (i). Considering item (ii), notice that
\eqref{DefCMI} together with autocalibration imply
$$
\Esp[Y|\pi(\Xvec),S]=\Esp[Y|\pi(\Xvec)]=\pi(\Xvec).
$$
This ends the proof.
\end{proof}

All the properties derived for autocalibrated premiums thus also hold for multicalibrated premiums.
In particular, multicalibrated predictors never overfit.
This has been formalized for autocalibrated predictors with the help of the
convex order. Recall from Denuit et al. (2005, Section 3.4) that given two random variables $Z$ and $T$,
with finite expectations, $T$ is said to be smaller than $Z$ in the convex order (denoted as $T\lcx Z$) if 
\begin{equation}\label{eq3.A.1}
\Esp[g(T)]\leq \Esp[g(Z)]\text{ for all convex functions }g,
\end{equation}
provided the expectations exist. 
In words, $T\lcx Z$ means that $T$ is less dispersed than $Z$ while $T$ and $Z$ have the same
expected value. In particular, $\Var[T]\leq \Var[Z]$.

According to Lemma 3.1 from W\"uthrich (2023), we know
that $\pi(\Xvec)\lcx\mu(\Xvec)$ holds for any autocalibrated $\pi(\cdot)$. Therefore, any multicalibrated premium $\pi(\cdot)$ with respect to $S=X_{j_0}$ satisfies
$$
\pi(\Xvec)\lcx\mu(\Xvec)\lcx Y.
$$
This shows that multicalibration prevents overfitting, in the sense that an autocalibrated
premium cannot be more variable than $\mu(\Xvec)$ nor $Y$.

We know that $\mu(\Xvec)$ is autocalibrated as
\begin{eqnarray*}
\Esp[Y|\mu(\Xvec)]&=&\Esp\big[\Esp[Y|\mu(\Xvec),\Xvec]\big|\mu(\Xvec)\big]\\
&=&\Esp\big[\Esp[Y|\Xvec]\big|\mu(\Xvec)\big]=\mu(\Xvec).
\end{eqnarray*}
The next result shows that $\mu(\Xvec)$ is also
multicalibrated with respect to $S=X_j$ for every $j\in\{1,\ldots,p\}$.

\begin{Property}
\label{PropMultiMu}
The best-estimate price $\mu(\Xvec)$ is multicalibrated with respect to any $S=X_j$, $j=1,2,\ldots,p$.
\end{Property}
\begin{proof}
It suffices to write
\begin{eqnarray*}
\Esp[Y|\mu(\Xvec),S]&=&\Esp\big[\Esp[Y|\mu(\Xvec),S,\Xvec]\big|\mu(\Xvec),S\big]\\
&=&\Esp\big[\Esp[Y|\Xvec]\big|\mu(\Xvec),S\big]\\
&=&\mu(\Xvec),
\end{eqnarray*}
which ends the proof.
\end{proof}

Since pure premiums correspond to conditional expectations, they can thus be
consistently estimated only if the expected loss is minimum for the mean response.
The class of Bregman loss functions is known to be strictly consistent for the
(conditional) mean functional so that they are the only meaningful loss functions
in the context of insurance pricing.
For a convex function $\ell(\cdot)$, recall that the Bregman loss function $L(\cdot,\cdot)$
is defined as
\begin{equation}
\label{BregmanLoss}
L(y, m) = \ell(y)-\ell(m) - \ell '(m)(y-m)
\end{equation}
where $\ell'$ denotes the subgradient of the convex function $\ell$.
A premium $\pi_2(\cdot)$ outperforms $\pi_1(\cdot)$ for the Bregman loss function $L$ in \eqref{BregmanLoss}
if $\Esp[L(Y,\widehat{\pi}_2(\Xvec))]\leq \Esp[L(Y,\widehat{\pi}_1(\Xvec))]$.
Since multicalibrated predictors are also autocalibrated, we know from Theorem 3.1
in Kr\"uger and Ziegel (2021) that given two multicalibrated premiums $\pi_1(\cdot)$ and $\pi_2(\cdot)$,
the inequality $\Esp[L(Y,\pi_2(\Xvec))]\leq \Esp[L(Y,\pi_1(\Xvec))]$
holds true for every Bregman loss function $L$ in \eqref{BregmanLoss}
if, and only if, $\pi_1(\Xvec)\lcx \pi_2(\Xvec)$.
This shows that the convex order is the appropriate tool to compare the relative performances
of multicalibrated premiums.

\section{Multibalance correction}

Balance correction has been proposed as a practical way to restore autocalibration
by Denuit et al. (2021). Precisely,
the balance-corrected version $\pi_{\mathrm{bc}}(\cdot)$ of the candidate premiums $\pi(\cdot)$ 
is defined as
\begin{equation}
\label{DefBC}
\pi_{\mathrm{bc}}(\Xvec)=\Esp[Y|\pi(\Xvec)].
\end{equation}
The new premium $\pi_{\mathrm{bc}}(\cdot)$ obtained from \eqref{DefBC} is always autocalibrated
as shown in Proposition 3.3 in W\"uthrich (2023) without the monotonicity condition assumed
in Denuit et al. (2021).  The intuition behind  \eqref{DefBC} is that $\pi(\cdot)$ is informative
but not necessarily on the right scale, so that financial equilibrium defining autocalibration is violated.
The conditional expectation \eqref{DefBC} corrects the candidate premium by retaining the order
induced by $\pi(\cdot)$ but averaging insurance losses corresponding to the same value of
$\pi(\Xvec)$.

Inspired from balance correction \eqref{DefBC} restoring autocalibration, let
us introduce multibalance correction restoring multicalibration.

\begin{Definition}
The multibalance-corrected version $\pi_{\mathrm{mbc}}(\cdot)$ of the premium $\pi(\cdot)$ 
with respect to $S$ is defined as
\begin{equation}
\label{DefMBC}
\pi_{\mathrm{mbc}}(\Xvec)=\Esp[Y|\pi(\Xvec),S].
\end{equation}
\end{Definition}

The next result shows that the premium $\pi_{\mathrm{mbc}}(\cdot)$ obtained by multibalance correction \eqref{DefMBC} is always multicalibrated with respect to $S$.

\begin{Proposition}
The multibalance-corrected version $\pi_{\mathrm{mbc}}(\cdot)$ of the premium $\pi(\cdot)$  defined in \eqref{DefMBC}
is multicalibrated with respect to $S=X_{j_0}$.
\end{Proposition}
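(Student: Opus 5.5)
We need to show $\Esp[Y|\pi_{\mathrm{mbc}}(\Xvec),S]=\pi_{\mathrm{mbc}}(\Xvec)$. Write $\pi_{\mathrm{mbc}}(\Xvec)=h(\pi(\Xvec),S)$ for a measurable function $h(p,s)=\Esp[Y|\pi(\Xvec)=p,S=s]$. The plan is a tower-property argument exploiting the fact that the $\sigma$-algebra generated by $(\pi_{\mathrm{mbc}}(\Xvec),S)$ is contained in the one generated by $(\pi(\Xvec),S)$. This holds because $(\pi_{\mathrm{mbc}}(\Xvec),S)=(h(\pi(\Xvec),S),S)$ is a measurable function of $(\pi(\Xvec),S)$. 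This is the same mechanism as in the proof of Property \ref{PropMultiMu} and as in Proposition 3.3 of W\"uthrich (2023) for balance correction.

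The key step is the chain
\begin{eqnarray*}
\Esp[Y|\pi_{\mathrm{mbc}}(\Xvec),S]&=&\Esp\big[\Esp[Y|\pi(\Xvec),S,\pi_{\mathrm{mbc}}(\Xvec)]\big|\pi_{\mathrm{mbc}}(\Xvec),S\big]\\
&=&\Esp\big[\Esp[Y|\pi(\Xvec),S]\big|\pi_{\mathrm{mbc}}(\Xvec),S\big]\\
&=&\Esp\big[\pi_{\mathrm{mbc}}(\Xvec)\big|\pi_{\mathrm{mbc}}(\Xvec),S\big]=\pi_{\mathrm{mbc}}(\Xvec).
\end{eqnarray*}
Its three equalities are justified as follows.
\begin{enumerate}
\item[(1)] The first equality is the tower property, after enlarging the conditioning set.
\item[(2)] The second equality holds because adding $\pi_{\mathrm{mbc}}(\Xvec)$ to $(\pi(\Xvec),S)$ adds no information, since it is a function of them.
\item[(3)] The last equality holds because $\pi_{\mathrm{mbc}}(\Xvec)$ is measurable with respect to the conditioning variables.
\end{enumerate}
Read pointwise, this gives $\Esp[Y|\pi_{\mathrm{mbc}}(\Xvec)=p,S=s]=p$ for all $p$ and $s$, which is exactly \eqref{DefMultiCal} for the premium $\pi_{\mathrm{mbc}}(\cdot)$.

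I would also record that $\pi_{\mathrm{mbc}}(\cdot)$ is a function of $\Xvec$ because $S=X_{j_0}$. It is therefore a legitimate premium structure, and integrability follows from the finite expectation of $Y$.

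The only delicate point is measure-theoretic bookkeeping. The identity holds almost surely, and it is stated for all $p$ in the paper's convention that equalities between random variables are understood almost surely. No monotonicity of $h$ in $p$ is needed, since only the $\sigma$-algebra inclusion is used, not an equality of $\sigma$-algebras. Otherwise the argument is routine.
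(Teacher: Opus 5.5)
Your proof is correct and follows essentially the same route as the paper, which uses exactly this tower-property chain. It enlarges the conditioning to $(\pi(\Xvec),S)$, drops $\pi_{\mathrm{mbc}}(\Xvec)$ since it is $\sigma(\pi(\Xvec),S)$-measurable, and concludes from measurability of $\pi_{\mathrm{mbc}}(\Xvec)$ with respect to $\sigma(\pi_{\mathrm{mbc}}(\Xvec),S)$; your remark that only the inclusion $\sigma(\pi_{\mathrm{mbc}}(\Xvec),S)\subseteq\sigma(\pi(\Xvec),S)$ is needed (no monotonicity) makes explicit what the paper leaves implicit.
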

\begin{proof}
The announced result follows from
\begin{eqnarray*}
\Esp[Y|\pi_{\mathrm{mbc}}(\Xvec),S]&=&\Esp\big[Y\big|\Esp[Y|\pi(\Xvec),S],S\big]\\
&=&\Esp\Big[\Esp\big[Y\big|\Esp[Y|\pi(\Xvec),S],S,\pi(\Xvec)\big]\Big|\Esp[Y|\pi(\Xvec),S],S\Big]\\
&=&\Esp\Big[\Esp\big[Y\big|S,\pi(\Xvec)\big]\Big|\Esp[Y|\pi(\Xvec),S],S\Big]\\
&=&\Esp\big[\pi_{\mathrm{mbc}}(\Xvec)\big|\pi_{\mathrm{mbc}}(\Xvec),S\big]\\
&=&\pi_{\mathrm{mbc}}(\Xvec),
\end{eqnarray*}
which ends the proof.
\end{proof}

It turns out that multibalance correction improves the performance of the premium
with respect to any Bregman loss function. This is precisely stated next.

\begin{Proposition}
\label{PropBregLoss}
For any premium $\pi$, we have
$$
\Esp\big[L\big(Y,\pi(\Xvec)\big) \big] \geq \Esp\big[L\big(Y,\pi_{\mathrm{bc}}(\Xvec)\big) \big] 
\geq \Esp\big[L\big(Y,\pi_{\mathrm{mbc}}(\Xvec)\big) \big] 
 \geq \Esp\big[L\big(Y,\mu(\Xvec)\big) \big]
$$
for all loss functions \eqref{BregmanLoss}.
\end{Proposition}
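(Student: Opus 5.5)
The plan is to reduce all three inequalities to a single projection identity for Bregman losses. If $\mathcal{G}$ is a sub-$\sigma$-field and $Z$ is $\mathcal{G}$-measurable, then
$$
\Esp[L(Y,Z)]=\Esp[L(Y,\Esp[Y|\mathcal{G}])]+\Esp[L(\Esp[Y|\mathcal{G}],Z)].
$$
This follows by expanding \eqref{BregmanLoss}. The difference $L(Y,Z)-L(Y,W)$, with $W=\Esp[Y|\mathcal{G}]$, equals
$$
\ell(W)-\ell(Z)-\ell'(Z)(Y-Z)+\ell'(W)(Y-W).
$$
Conditioning on $\mathcal{G}$ replaces $Y$ by $W$, because $Z$, $W$, $\ell'(Z)$ and $\ell'(W)$ are all $\mathcal{G}$-measurable. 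This leaves $\Esp[L(W,Z)]$, which is nonnegative by convexity of $\ell$. Hence $\Esp[L(Y,Z)]\geq\Esp[L(Y,\Esp[Y|\mathcal{G}])]$ for every $\mathcal{G}$-measurable $Z$.

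The three inequalities come from three choices of $\mathcal{G}$, each nested in the next:
$$
\sigma(\pi(\Xvec))\subseteq\sigma(\pi(\Xvec),S)\subseteq\sigma(\Xvec).
$$
The second inclusion uses $S=X_{j_0}$.
\begin{enumerate}
\item[(a)] Take $\mathcal{G}=\sigma(\pi(\Xvec))$ and $Z=\pi(\Xvec)$. Then $\Esp[Y|\mathcal{G}]=\pi_{\mathrm{bc}}(\Xvec)$ by \eqref{DefBC}, which gives the first inequality.
\item[(b)] Take $\mathcal{G}=\sigma(\pi(\Xvec),S)$ and $Z=\pi_{\mathrm{bc}}(\Xvec)$. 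This $Z$ is a function of $\pi(\Xvec)$, so it is $\mathcal{G}$-measurable, and $\Esp[Y|\mathcal{G}]=\pi_{\mathrm{mbc}}(\Xvec)$ by \eqref{DefMBC}. This gives the second inequality.
\item[(c)] Take $\mathcal{G}=\sigma(\Xvec)$ and $Z=\pi_{\mathrm{mbc}}(\Xvec)$. This $Z$ is $\sigma(\Xvec)$-measurable precisely because $S=X_{j_0}$ is a component of $\Xvec$, and $\Esp[Y|\Xvec]=\mu(\Xvec)$. This gives the third inequality. (Remark \ref{SXj0} shows the same holds when $S=\psi(\Xvec)$.)
\end{enumerate}

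An alternative route would go through Theorem 3.1 of Kr\"uger and Ziegel (2021) and the convex ordering $\pi_{\mathrm{bc}}\lcx\pi_{\mathrm{mbc}}\lcx\mu$ of autocalibrated predictors, obtained by Jensen's inequality with nested conditioning. That route handles the last two inequalities but not the first, since $\pi$ itself need not be autocalibrated. The direct decomposition above is more uniform, so I would use it.

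The main obstacle is technical rather than conceptual. The decomposition needs the expectations to be finite and needs us to justify pulling $\ell'(Z)$ and $\ell'(W)$ out of the conditional expectation. I would state the result under the integrability that makes $\Esp[L(Y,\cdot)]$ finite, or handle the general case by truncation. Boundary issues when $\ell$ is non-differentiable are avoided by fixing a measurable selection of the subgradient $\ell'$.
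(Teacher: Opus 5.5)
Your proof is correct, but it is not the route the paper takes. The paper argues through the convex order. Nested conditioning gives $\pi_{\mathrm{bc}}(\Xvec)\lcx\pi_{\mathrm{mbc}}(\Xvec)\lcx\mu(\Xvec)$, and all three premiums are autocalibrated. The last two inequalities then follow from the Kr\"uger--Ziegel equivalence between Bregman dominance and convex order for autocalibrated predictors. The first inequality is the one you noted this route cannot reach, since $\pi$ need not be autocalibrated; the paper obtains it by citing Proposition 4.5 of Denuit and Trufin (2024).

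Your route applies the identity $\Esp[L(Y,Z)]=\Esp[L(Y,W)]+\Esp[L(W,Z)]$ along $\sigma(\pi(\Xvec))\subseteq\sigma(\pi(\Xvec),S)\subseteq\sigma(\Xvec)$. This has several advantages:
\begin{itemize}
\item It treats all three steps uniformly and is self-contained.
\item It needs no autocalibration of the compared premiums, only nested measurability.
\item It quantifies each gain exactly, e.g.\ $\Esp[L(Y,\pi_{\mathrm{bc}}(\Xvec))]-\Esp[L(Y,\pi_{\mathrm{mbc}}(\Xvec))]=\Esp[L(\pi_{\mathrm{mbc}}(\Xvec),\pi_{\mathrm{bc}}(\Xvec))]$. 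For strictly convex $\ell$, the gap therefore vanishes exactly when the two premiums coincide almost surely.
\item It shows that the last inequality is precisely where $S$ being a function of $\Xvec$ is used. This is consistent with Proposition~\ref{Prop:MBC-general}, where $\pi_{\mathrm{mbc}}$ uses information beyond $\Xvec$ and the comparison with $\mu(\Xvec)$ is no longer guaranteed.
\end{itemize}

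The paper's route is shorter given the cited results. It also ties the proposition to the convex-order framework of Section 5, delivering the chain $\pi_{\mathrm{bc}}(\Xvec)\lcx\pi_{\mathrm{mbc}}(\Xvec)\lcx\mu(\Xvec)$ as a by-product. Your decomposition does not state this chain, although it follows from Jensen's inequality with the same nesting.
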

\begin{proof}
Since $\Esp[Y|\pi(\Xvec)]\lcx \Esp[Y|\pi(\Xvec),S]$, we have $\pi_{\mathrm{bc}}(\Xvec)\lcx\pi_{\mathrm{mbc}}(\Xvec)\lcx \mu(\Xvec)$. The announced result then follows from Proposition 4.5 in Denuit and Trufin (2024)
and the equivalence of Bregman dominance and convex order for autocalibrated predictors.
\end{proof}

\begin{Remark}
The multibalance-corrected premium obtained from a candidate premium $\pi(\cdot)$
does not necessarily coincide with the multibalance-corrected premium obtained from its
balance-corrected version $\pi_{\mathrm{bc}}(\cdot)$.
Indeed, 
\begin{eqnarray*}
\Esp[Y|\pi_{\mathrm{bc}}(\Xvec),S]
&=&
\Esp\big[\Esp[Y|\pi(\Xvec),\pi_{\mathrm{bc}}(\Xvec),S]\big|\pi_{\mathrm{bc}}(\Xvec),S\big] \\
&=&
\Esp\big[\Esp[Y|\pi(\Xvec),S]\big|\pi_{\mathrm{bc}}(\Xvec),S\big]\\
&=&
\Esp\big[\pi_{\mathrm{mbc}}(\Xvec)\big|\pi_{\mathrm{bc}}(\Xvec),S\big].
\end{eqnarray*}
Hence, the multibalance-corrected premium based on $\pi_{\mathrm{bc}}(\cdot)$ is obtained by
averaging the multibalance-corrected premium based on $\pi(\cdot)$ over the level sets of
$\pi_{\mathrm{bc}}(\cdot)$, for each value of $S$.
As a consequence, the two multibalance-corrected premiums may differ, since this
averaging can remove information relevant for predicting $Y$ once $S$ is given.
They coincide if, and only if, conditioning with respect to $\pi_{\mathrm{bc}}(\Xvec)$ retains all the
predictive information carried by $\pi(\Xvec)$ for each value of $S$
(that is, if and only if $\pi(\Xvec)$ is measurable with respect to
$\sigma(\pi_{\mathrm{bc}}(\Xvec),S)$).
A simple sufficient condition for this to hold is that the mapping
$p\mapsto\Esp[Y|\pi(\Xvec)=p]$ be one-to-one on the range of $\pi(\Xvec)$, so that
$\pi(\Xvec)$ can be recovered from $\pi_{\mathrm{bc}}(\Xvec)$.
In particular, this condition is satisfied when the mapping
$p\mapsto\Esp[Y|\pi(\Xvec)=p]$ is strictly increasing, which means that $\pi(\Xvec)$
induces the same ordering of risks as $\mu(\Xvec)=\Esp[Y|\Xvec]$.
\end{Remark}

To conclude this section, we consider the case discussed in Remark~\ref{Rem:S-not-in-Xvec},
where the sensitive feature $S$ is excluded from the set of admissible
rating factors, either because of regulation or of a commercial decision made by the insurer.
The next result shows how multibalance correction provides an illustration
of the structural limitation highlighted in Remark~\ref{Rem:S-not-in-Xvec}.

\begin{Proposition}
\label{Prop:MBC-general}
Assume that $S$ is excluded from the set of admissible rating factors $\Xvec$.
Then, for any admissible premium $\pi(\cdot)$, the following statements hold.
\begin{enumerate}
\item[(i)]
The corresponding premium $\pi_{\mathrm{mbc}}(\cdot)$ is multicalibrated with respect to $S$, in the sense that
\[
\Esp[Y|\pi_{\mathrm{mbc}}(\Xvec,S),S]=\pi_{\mathrm{mbc}}(\Xvec,S).
\]

\item[(ii)]
In general, $\pi_{\mathrm{mbc}}(\cdot)$ depends on $S$ and therefore does not define an admissible
premium.

\item[(iii)]
The multibalance-corrected premium $\pi_{\mathrm{mbc}}(\cdot)$ does not depend on $S$ if, and only if,
\[
\Esp[Y|\pi(\Xvec),S]=\Esp[Y|\pi(\Xvec)],
\]
that is, if, and only if, $Y$ and $S$ are conditionally mean-independent, given the premium $\pi(\Xvec)$.

\item[(iv)]
The multibalance correction can be written as
\[
\pi_{\mathrm{mbc}}(\Xvec,S)=\Esp\big[\Esp[Y|\Xvec,S]\big|\pi(\Xvec),S\big].
\]
In particular, $\pi_{\mathrm{mbc}}(\cdot)$ coincides with $\Esp[Y|\Xvec,S]$ if, and only if, $\Esp[Y|\Xvec,S]$ is measurable with respect to $\sigma(\pi(\Xvec),S)$.
\end{enumerate}
\end{Proposition}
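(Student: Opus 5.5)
The plan is to treat the four items separately, reusing the tower-property manipulations already used in this section. Throughout, $\pi_{\mathrm{mbc}}(\Xvec,S)=\Esp[Y|\pi(\Xvec),S]$. By the Doob--Dynkin lemma it is a measurable function $h(\pi(\Xvec),S)$; the notation $\pi_{\mathrm{mbc}}(\Xvec,S)$ simply records that $S$ now enters as an extra argument, because $S$ is not a component of $\Xvec$.

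For item (i), I will repeat the argument of the preceding Proposition word for word. That argument never used $S=X_{j_0}$: it only used that $\pi_{\mathrm{mbc}}$ is $\sigma(\pi(\Xvec),S)$-measurable, so that $\sigma(\pi_{\mathrm{mbc}},S)\subseteq\sigma(\pi(\Xvec),S)$. Conditioning first on $(\pi(\Xvec),S)$ inside $\Esp[Y|\pi_{\mathrm{mbc}},S]$ gives $\Esp[\pi_{\mathrm{mbc}}|\pi_{\mathrm{mbc}},S]=\pi_{\mathrm{mbc}}$.

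Item (ii) is a \emph{in general} statement, so a counterexample suffices. I will take $\Xvec$ trivial or independent of $S$, let $Y$ have a conditional mean that depends on $S$, and take $\pi$ constant. Then $\pi_{\mathrm{mbc}}=\Esp[Y|S]$, which is not a function of $\Xvec$ and is therefore inadmissible. This also follows from item (iii) as soon as conditional mean independence fails.

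Item (iii) is the main obstacle, because "does not depend on $S$" must be made precise. I would define it as: $\pi_{\mathrm{mbc}}$ is $\sigma(\pi(\Xvec))$-measurable. I read the statement this way because $\pi_{\mathrm{mbc}}$ is a function of $(\pi(\Xvec),S)$, so removing its $S$-dependence leaves a function of $\pi(\Xvec)$ alone. The weaker reading "measurable w.r.t.\ $\sigma(\Xvec)$" need not imply this, and I would flag that point. With this definition, the "if" direction is immediate, since $\Esp[Y|\pi(\Xvec)]$ is a function of $\pi(\Xvec)$. For the "only if" direction, suppose $\pi_{\mathrm{mbc}}=g(\pi(\Xvec))$. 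Then by the tower property
\[
\Esp[Y|\pi(\Xvec)]=\Esp\big[\Esp[Y|\pi(\Xvec),S]\big|\pi(\Xvec)\big]=\Esp[g(\pi(\Xvec))|\pi(\Xvec)]=g(\pi(\Xvec))=\pi_{\mathrm{mbc}},
\]
which is exactly the equality $\Esp[Y|\pi(\Xvec),S]=\Esp[Y|\pi(\Xvec)]$. This equality is the conditional mean independence \eqref{DefCMI} written with random variables.

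For item (iv), since $\sigma(\pi(\Xvec),S)\subseteq\sigma(\Xvec,S)$, the tower property gives $\Esp[Y|\pi(\Xvec),S]=\Esp[\Esp[Y|\Xvec,S]|\pi(\Xvec),S]$. For the equivalence, suppose first that $\Esp[Y|\Xvec,S]$ is $\sigma(\pi(\Xvec),S)$-measurable. Then the outer conditional expectation leaves it unchanged, so it equals $\pi_{\mathrm{mbc}}$. Conversely, if the two coincide almost surely, then $\Esp[Y|\Xvec,S]$ equals a $\sigma(\pi(\Xvec),S)$-measurable variable almost surely, which is the required measurability up to null sets.
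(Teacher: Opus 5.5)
Your proposal is correct and follows the paper's own argument: the same tower-property computations for (i) and (iv), and the same identification in (iii) of ``not depending on $S$'' with $\Esp[Y|\pi(\Xvec),S]=\Esp[Y|\pi(\Xvec)]$. It is more explicit than the paper, which merely asserts (ii)--(iii) and does not argue the ``in particular'' clause of (iv), and your reading of (iii) as $\sigma(\pi(\Xvec))$-measurability is exactly what makes its ``only if'' direction true. Two minor remarks: a constant $\pi$ clashes with the standing assumption that $\pi(\Xvec)$ has a positive density, though taking $\Xvec$ independent of $S$ with a continuous $\pi(\Xvec)$ and $\Esp[Y|\pi(\Xvec),S]$ genuinely varying with $S$ repairs this; and your aside that (ii) ``also follows from (iii)'' does not hold under your own distinction, since failure of conditional mean independence only excludes $\sigma(\pi(\Xvec))$-measurability of $\pi_{\mathrm{mbc}}$, not $\sigma(\Xvec)$-measurability, so it is the counterexample that actually carries (ii).
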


\begin{proof}
By definition of the multibalance correction, $\pi_{\mathrm{mbc}}(\Xvec,S)=\Esp[Y|\pi(\Xvec),S]$.
Therefore,
\begin{eqnarray*}
\Esp[Y|\pi_{\mathrm{mbc}}(\Xvec,S),S]
&=&
\Esp\big[\Esp[Y|\pi(\Xvec),\pi_{\mathrm{mbc}}(\Xvec,S),S]\big|\pi_{\mathrm{mbc}}(\Xvec,S),S\big] \\
&=&
\Esp\big[\Esp[Y|\pi(\Xvec),S]\big|\pi_{\mathrm{mbc}}(\Xvec,S),S\big]\\
&=&\Esp[\pi_{\mathrm{mbc}}(\Xvec,S)|\pi_{\mathrm{mbc}}(\Xvec,S),S]\\
&=&
\pi_{\mathrm{mbc}}(\Xvec,S).
\end{eqnarray*}
This proves the statement under item (i).

Considering items (ii)-(iii), notice that $\pi_{\mathrm{mbc}}(\Xvec,S)=\Esp[Y|\pi(\Xvec),S]$
generally depends on $S$.
Hence, $\pi_{\mathrm{mbc}}(\cdot)$ is independent of $S$ if, and only if,
\[
\Esp[Y|\pi(\Xvec),S]=\Esp[Y|\pi(\Xvec)],
\]
which proves the statements under items (ii)-(iii).

Finally, 
\begin{eqnarray*}
\Esp[Y|\pi(\Xvec),S]
&=&\Esp\big[\Esp[Y|\Xvec,\pi(\Xvec),S]\big|\pi(\Xvec),S\big] \\
&=&\Esp\big[\Esp[Y|\Xvec,S]\big|\pi(\Xvec),S\big]
\end{eqnarray*}
proves the statement under item (iv) and ends the proof.
\end{proof}

Proposition~\ref{Prop:MBC-general} provides an
illustration of Remark~\ref{Rem:S-not-in-Xvec}.
When $S$ is excluded from the set of admissible rating factors $\Xvec$, multicalibration
can only be achieved in the exceptional case where $Y$ and $S$ are conditionally mean-independent,
given the premium.
Otherwise, enforcing multicalibration necessarily leads to premiums that depend on $S$
and are thus inadmissible.

\section{Practical implementation of multicalibration}

This section proposes practical ways to make a candidate premium $\pi(\cdot)$ multicalibrated
with respect to a sensitive feature $S$ included in $\Xvec$. The approaches differ according to the format of $S$.

\subsection{Categorical sensitive feature}
\label{subsec:catS}

Assume that the sensitive feature $S$ is categorical, with levels $\ell\in\mathcal{S}$.
In principle, multicalibration could be implemented by making
$\pi(\cdot)$ autocalibrated separately in every group $S=\ell$,
by replacing $\pi(\cdot)$ with its balance-corrected version $\pi_{\text{bc}}(\cdot)$
for every level of $S$. However, this simple
approach is problematic if exposures are limited for some groups, so that non-parametric
smoothing does not work.

This section presents two empirical approaches to multicalibration.
The first one implements multicalibration through an explicit multibalance correction, while the second one enforces multicalibration directly through an iterative bias-correction procedure.
Throughout this section, we consider a dataset of observations $(N_i,\Xvec_i,S_i,w_i)$, $i=1,\ldots,n$, where $N_i$ denotes the claim count, $S_i=X_{i,j_0}\in\mathcal S$ the categorical sensitive feature and $w_i\ge0$ known weights (typically exposures). In practice, we use the observed claim frequency $Y_i=\frac{N_i}{w_i}$ as response variable, obtained by dividing the claim count by the exposure.

\subsubsection{Direct empirical multibalance correction via groupwise regression}
\label{subsec:iso_mbc}

Multibalance correction amounts to restoring balance separately within each group defined
by the sensitive feature $S$.
That is, instead of enforcing autocalibration globally, the balance condition is imposed
conditionally on $S=\ell$, for every $\ell\in\mathcal S$.
This principle can be implemented in several ways, depending on how the conditional mean
function $\Esp[Y | \pi(\Xvec),S=\ell]$ is estimated within each group.
Examples include local linear regression as in Ciatto et al.\ (2023) or isotonic regression
as in W\"uthrich (2023).

To fix ideas, we describe here an implementation based on isotonic regression.
For each level $\ell\in\mathcal S$, consider the conditional mean function
\[
m_\ell(p)=\Esp\big[Y | \pi(\Xvec)=p, S=\ell\big],\qquad p>0.
\]
An empirical implementation of multibalance correction is obtained by estimating $m_\ell(\cdot)$
nonparametrically within each group $S=\ell$, under a monotonicity constraint in $p$.
Specifically, we define $\widehat m_\ell$ as the solution of the weighted isotonic regression problem
\begin{equation}
\label{eq:isotonic_groupwise}
\widehat m_\ell \in \argmin_{m\ \mathrm{non-decreasing}}
\sum_{i:\,S_i=\ell} w_i\bigl(Y_i-m(\pi_i)\bigr)^2,
\qquad \ell\in\mathcal S,
\end{equation}
restricting the search to non-decreasing conditional mean function, that is, such that $p\le p'$ implies $m(p)\le m(p')$ on the range of
$\{\pi_i:S_i=\ell\}$.
The resulting empirical multibalance-corrected premium is then obtained by the plug-in rule
\begin{equation}
\label{eq:pi_mbc_hat_isotonic}
\widehat\pi_{\mathrm{mbc}}(\Xvec_i)
=
\widehat m_{S_i}\!\big(\pi(\Xvec_i)\big),
\qquad i=1,\ldots,n,
\end{equation}
where $\widehat{m}_{\ell}(\cdot)$ is the solution to \eqref{eq:isotonic_groupwise}.

This construction is the direct analogue of balance correction for autocalibration implemented
via isotonic regression by W\"uthrich (2023), except that the recalibration is performed separately within each
level of the categorical sensitive feature $S$.
By construction, the premium $\widehat\pi_{\mathrm{mbc}}(\cdot)$ in
\eqref{eq:pi_mbc_hat_isotonic} is an empirical implementation of the multibalance-corrected
premium
\[
\pi_{\mathrm{mbc}}(\Xvec)=\Esp[Y | \pi(\Xvec),S],
\]
and therefore achieves multicalibration by enforcing balance conditionally on $S$.

When some levels $\ell\in\mathcal S$ carry limited exposure, the fully nonparametric estimation in
\eqref{eq:isotonic_groupwise} may be unstable, yielding step functions with high variance and
poor out-of-sample behavior.
The next section therefore introduces a regularized procedure that relaxes the fully
non-parametric group-wise approach.
Rather than estimating conditional mean functions independently within each level of $S$,
the procedure enforces balance on a finite collection of $(\pi,S)$-cells and stabilizes
local corrections by borrowing strength across groups through exposure-driven shrinkage.

\subsubsection{Regularized empirical enforcement of multicalibration via iterative bias correction}
\label{subsec:iter_multi}

We now introduce a regularized empirical procedure that enforces multicalibration directly,
without explicitly applying the multibalance correction.
Starting from a baseline premium, the procedure modifies the premium iteratively so that,
at the chosen level of discretization, average residuals vanish within each group defined by
the current premium level and the sensitive feature.
Exposure-driven shrinkage is used to stabilize local corrections when data are sparse.

\paragraph{Binning and residual biases.}
Fix an integer $K\ge 1$.
Starting from an initial premium $\pi^{(0)}(\cdot)=\pi(\cdot)$, consider at iteration $j\ge 0$
a partition of the range of $\pi^{(j)}(\Xvec)$ into $K$ disjoint intervals
$\{B^{(j)}_1,\dots,B^{(j)}_K\}$ (for instance empirical quantile bins).
For each observation $i=1,\ldots,n$, define the residual
\[
R_i^{(j)}=Y_i-\pi^{(j)}(\Xvec_i).
\]
For each bin $k\in\{1,\ldots,K\}$ and each level $\ell\in\mathcal S$, define the index set
\[
I^{(j)}_{k\ell}
=
\big\{ i:\ \pi^{(j)}(\Xvec_i)\in B^{(j)}_k,\ S_i=\ell \big\},
\]
together with the corresponding exposure
\[
w^{(j)}_{k\ell}=\sum_{i\in I^{(j)}_{k\ell}} w_i,
\qquad
w^{(j)}_{k\bullet}=\sum_{\ell\in\mathcal S} w^{(j)}_{k\ell}.
\]
Whenever $w^{(j)}_{k\ell}>0$, define the cell-wise and pooled bin-wise mean residuals as
\[
\widehat b^{(j)}_{k\ell}
=
\frac{1}{w^{(j)}_{k\ell}}\sum_{i\in I^{(j)}_{k\ell}} w_i R_i^{(j)},
\qquad
\widehat b^{(j)}_{k}
=
\frac{1}{w^{(j)}_{k\bullet}}\sum_{\ell\in\mathcal S}\ \sum_{i\in I^{(j)}_{k\ell}} w_i R_i^{(j)}.
\]
The quantity $\widehat b^{(j)}_{k\ell}$ is the exposure-weighted sample analogue of $\Esp\!\left[Y-\pi^{(j)}(\Xvec)\,\big|\,\pi^{(j)}(\Xvec)\in B^{(j)}_k,\ S=\ell\right]$.

\paragraph{Exposure-driven shrinkage.}
When $w^{(j)}_{k\ell}$ is small, the estimate $\widehat b^{(j)}_{k\ell}$ is unstable.
To stabilize the correction, we shrink $\widehat b^{(j)}_{k\ell}$ toward the pooled bin-level quantity
$\widehat b^{(j)}_{k}$ according to
\begin{equation}
\label{eq:shrink_bias}
\widetilde b^{(j)}_{k\ell}
=
z^{(j)}_{k\ell}\widehat b^{(j)}_{k\ell}
+
\bigl(1-z^{(j)}_{k\ell}\bigr)\widehat b^{(j)}_k,
\qquad
z^{(j)}_{k\ell}=\frac{w^{(j)}_{k\ell}}{w^{(j)}_{k\ell}+c},
\end{equation}
where $c>0$ is a tuning parameter controlling the amount of pooling across groups.

\paragraph{Update step.}
Let $\eta\in(0,1]$ be a step size.
For each observation $i$ such that $\pi^{(j)}(\Xvec_i)\in B^{(j)}_k$ and $S_i=\ell$, update
\begin{equation}
\label{eq:update_iter_mbc}
\pi^{(j+1)}(\Xvec_i)
=
\pi^{(j)}(\Xvec_i)
+
\eta\,\widetilde b^{(j)}_{k\ell}.
\end{equation}
This update adjusts the premium by an exposure-regularized estimate of the residual bias
in the corresponding $(\pi,S)$-cell.

\paragraph{Stopping criterion.}
Iterations are stopped when the remaining correction becomes negligible.
A convenient criterion is to retain the solution such that
\[
\max_{k,\ell:\,w^{(j)}_{k\ell}>0}
\frac{\big|\eta\,\overline{\widetilde b}{}^{(j)}_{k\ell}\big|}
{\overline{\pi}{}^{(j)}_{k\ell}}
\le \delta,
\]
where $\overline{\pi}{}^{(j)}_{k\ell}$ and $\overline{\widetilde b}{}^{(j)}_{k\ell}$
denote exposure-weighted averages of $\pi^{(j)}(\Xvec_i)$ and $\widetilde b^{(j)}_{k\ell}$
over $i\in I^{(j)}_{k\ell}$, and $\delta>0$ is a tolerance parameter selected by the
actuary.

\paragraph{Interpretation.}
At convergence, the procedure yields a premium $\pi^{(J)}(\cdot)$ such that, for each bin $B^{(J)}_k$ and each level $\ell\in\mathcal S$, the exposure-weighted mean residual within the corresponding cell is close to zero. Equivalently, at the resolution induced by the binning scheme, no systematic average deviation between $Y$ and $\pi^{(J)}(\Xvec)$ remains within groups defined by $(\pi^{(J)}(\Xvec),S)$.

This property is the discrete counterpart of the multicalibration requirement
$  \Esp[Y | \pi^{(J)}(\Xvec)=p,\ S=\ell]=p$, $p>0$, $\ell\in\mathcal S$,
with exact conditioning replaced by conditioning on bin membership. Because the bins are updated along the iterations and are defined using the current premium $\pi^{(j)}(\cdot)$, the resulting premium $\pi^{(J)}(\cdot)$ should be understood as enforcing multicalibration through a fixed-point property, rather than as the result of a single application of the multibalance correction.

In contrast, if the bins are fixed using the baseline premium $\pi(\cdot)$ and a single update is performed, the procedure reduces to a discretized empirical approximation of the multibalance-corrected premium
$ \pi_{\mathrm{mbc}}(\Xvec)=\Esp[Y | \pi(\Xvec),S]$, at the chosen binning resolution.

\paragraph{On the role of credibility weights.}

In the classical B\"uhlmann--Straub framework, credibility arises from an explicit stochastic data-generating structure. A latent parameter $\Theta$ governs repeated observations at individual risk level, leading to the well-known variance decomposition into process variance and structural variance.

In the present multicalibration setting, no such structural latent parameter is postulated. The procedure is entirely cross-sectional and algorithmic. For a given premium bin $k$ and group $\ell$, the quantity
\begin{align*}
b_{k\ell} = \Esp\!\left[Y-\pi(X)| \pi(X)\in B_k,\; S=\ell\right]
\end{align*}
may be viewed as a latent systematic bias parameter attached to the cell $(k,\ell)$, but this parameter is induced by the current premium rather than by an underlying stochastic heterogeneity. The observations within each cell simply play the role of repeated measurements used to estimate this conditional mean residual.

The analogy with credibility theory is therefore algebraic rather than structural. As in the B\"uhlmann--Straub model, the empirical estimator $\widehat b_{k\ell}$ is noisy when exposure $w_{k\ell}$ is small. A variance decomposition of the form
\begin{align*}
\Var[\widehat b_{k\ell}] = \frac{\Esp[\sigma^2_{k\ell}]}{w_{k\ell}} + \Var[b_{k\ell}]
\end{align*}
motivates shrinking the cell-specific estimator toward the pooled bin-level quantity $\widehat b_k$, leading to the weight
\begin{align*}
z_{k\ell} = \frac{w_{k\ell}}{w_{k\ell}+c}.
\end{align*}

The resulting correction
\begin{align*}
\widetilde b_{k\ell} = z_{k\ell}\widehat b_{k\ell} + (1-z_{k\ell})\widehat b_k
\end{align*}
should therefore be interpreted primarily as an exposure-driven regularization device. It stabilizes local bias estimates in low-exposure cells while allowing more granular adjustments where sufficient data are available.

In contrast with classical credibility theory, however, we do not assume an explicit hierarchical model with a stochastic latent parameter $\Theta$. The credibility weight here provides a principled and actuarially interpretable shrinkage mechanism, but the multicalibration procedure itself remains a post-processing algorithm rather than the consequence of a fully specified
probabilistic credibility model.

\subsubsection{Numerical illustrations}
\label{subsec:illustrdisc}

We apply the proposed multicalibration procedure to the data set used by
Noll et al. (2018). This data set contains a French motor third-party liability (MTPL) insurance portfolio comprising of $677~991$ entries, each corresponding to a policy. Twelve variables are observed for each of these policies, listed in Table \ref{tab:variables}. We refer to Noll et al. (2018) for a detailed description of the complete data set.  

\begin{table}[h!]
\begin{center}
\caption{Description of the variables in the \texttt{freMTPL2freq} dataset.}
\label{tab:variables}
\begin{tabular}{@{}lp{0.85\textwidth}@{}}
\toprule
Variable & Description \\ \midrule
\texttt{IDpol} & Policy number. \\
\texttt{ClaimNb} & Number of claims. \\
\texttt{Exposure} & Total exposure in yearly units. \\
\texttt{Area} & Area code (A,B,C,D,E or F).\\
\texttt{Region} & Region of France where the policy is registered, categorical variable with 22 levels.\\
\texttt{Density} & Population density (number of inhabitants per square km) of the city where the policyholder lives. \\
\texttt{BonusMalus} & Bonus-malus level, ranging between 50 and 230 (reference level: 100).\\
\texttt{DrivAge} & Age of the driver of the vehicle, in years.\\
\texttt{VehAge} & Age of the vehicle, in years.\\
\texttt{VehGas} & Type of fuel used for the vehicle (diesel or regular gas).\\
\texttt{VehPower} & Power of the car, categorical variable ranging from 4 to 15. \\
\texttt{VehBrand} & Brand of the car, categorical variable with 11 levels. \\
\bottomrule
\end{tabular}
\end{center}
\end{table}

The outcome variable $Y$ is the observed claim frequency, defined as $Y=\frac{N}{w}$, where $N=\texttt{ClaimNb}$ and $w=\texttt{Exposure}$ measured in policy-years. For the multicalibration analysis, we consider a discrete sensitive attribute
\begin{align*}
    S^{\mathrm d}=\texttt{bin\_VehAge},
\end{align*}
obtained by grouping \texttt{VehAge} into three bins of approximately equal size: $(0,3]$, $(3,9]$ and $>9$. The rating variables are
\begin{align*}
    \Xvec = \{\texttt{Area}, \texttt{Region}, \texttt{Density}, \texttt{BonusMalus}, \texttt{DrivAge}, \texttt{VehAge}, \texttt{VehGas}, \texttt{VehBrand} \},
\end{align*}
while $\texttt{bin\_VehAge}$ is used solely for calibration.

The data are randomly split into training (60\%), validation (20\%), and test (20\%) sets. Baseline premiums are obtained from a Poisson GAM with log-link and exposure offset. Precisely, we assume that
\begin{align*}
    N_i | \Xvec_i \sim \mathrm{Poisson}\big(w_i \lambda(\Xvec_i)\big),
\end{align*}
with
\begin{align*}
    \ln \lambda(\Xvec_i) = \beta_0 + \sum_m f_m(X_{i,m}).
\end{align*}
The corresponding frequency premium is then given by
\begin{align*}
    \pi(\Xvec_i)=\lambda(\Xvec_i)=\Esp[Y_i| \Xvec_i].
\end{align*}
Although $\pi(\cdot)$ is calibrated on average, it displays systematic residual bias across premium levels and across vehicle-age groups.

Starting from $\pi(\cdot)$, we consider five premium constructions. The first is the unawareness premium itself. The second corresponds to autocalibration obtained as a limiting case of the iterative multicalibration procedure when the credibility parameter $c$ is taken to be very large. In that regime,
\begin{align*}
    z_{k\ell} = \frac{w_{k\ell}}{w_{k\ell}+c} \approx 0,
\end{align*}
so that only the bin-level bias enters the update, enforcing autocalibration.
As a non-iterative benchmark targeting the same objective, we also consider classical balance correction
$ \pi_{\mathrm{bc}}(\Xvec)=\Esp[Y | \pi_0(\Xvec)]$ based on weighted isotonic regression,
Full multicalibration is obtained for finite $c$, combining bin-level and cell-level biases through exposure-driven shrinkage so as to enforce $  \Esp[Y-\pi(\Xvec) | \text{bin}, S] \approx 0$.
Finally, multibalance correction is defined by
$  \pi_{\mathrm{mbc}}(\Xvec)=\Esp[Y | \pi(\Xvec), S]$,
estimated via weighted isotonic regression within each vehicle-age group.
Throughout, we use $K=10$ premium bins, step size $\eta=0.2$, and stopping threshold $\delta=0.01$. Residual bias is computed as exposure-weighted averages within each premium bin and vehicle-age group.

Figure~\ref{fig:MCvsMBC} illustrates the residual structure. The baseline premiums exhibit a clear monotone distortion across premium bins, with pronounced under-pricing in the highest bins. After autocalibration (iterative or isotonic), this global pattern disappears and the overall residual curve becomes essentially flat. However, significant differences remain across vehicle-age groups within the same premium bin, indicating persistent conditional bias with respect to $S$, as shown in the plot on the second row.

Applying full multicalibration or multibalance correction removes these group-specific distortions. In both cases, the colored curves align closely around zero across all bins. Multicalibration achieves this through iterative bias updates with exposure-driven shrinkage, producing smooth residual profiles. Multibalance correction enforces conditional calibration directly via group-wise isotonic regression. Minor local oscillations reflect the step-wise nature of the estimator and the absence of cross-group regularization. Overall, both methods successfully restore conditional balance while preserving premium-level calibration, with the iterative approach yielding slightly more regular patterns.

\begin{figure}[h!]
    \centering
    \includegraphics[width=\textwidth]{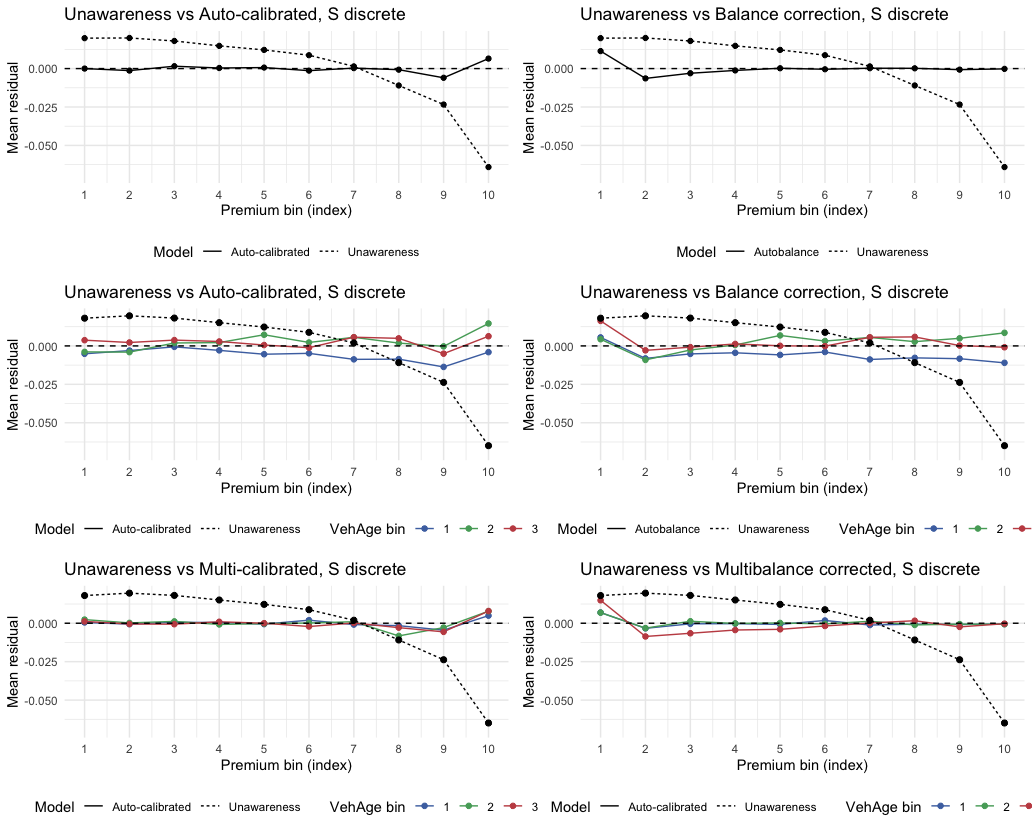}
    \caption{Mean residual pricing bias across premium bins for discrete vehicle-age groups. Top row: baseline vs.\ autocalibration. Middle row: autocalibration via iterative algorithm (left) and balance correction via isotonic regression (right). Bottom row: multicalibration (iterative) and multibalance correction (group-wise isotonic regression).}
    \label{fig:MCvsMBC}
\end{figure}

Table~\ref{tab:perf_test_disc} reports out-of-sample performance on the test set using Poisson deviance and Gini coefficient. The baseline model exhibits by far the highest deviance, confirming that miscalibration in the premium score materially affects predictive accuracy. Although the baseline GAM is estimated by maximum likelihood in the covariate space, it does not guarantee that the induced premium satisfies autocalibration. Systematic distortions across premium levels therefore translate into likelihood losses when aggregated over the portfolio.

Autocalibration yields a substantial reduction in deviance relative to the baseline, showing that correcting premium-level distortions already improves adequacy. Introducing conditional calibration with respect to vehicle age further reduces deviance: the multicalibrated premium achieves the lowest deviance overall, improving upon the autocalibrated specification. In comparison, the non-iterative balance and multibalance corrections achieve slightly higher deviance values. Overall, the iterative procedures (autocalibration and multicalibration) outperform their isotonic counterparts in terms of deviance.

In terms of discriminatory power, all calibrated models substantially improve upon the baseline. The Gini coefficients of the calibrated models are relatively close, with autocalibration and balance correction attaining the highest values. 

\begin{table}[htbp]
\centering
\caption{Out-of-sample performance comparison on the test set.}
\label{tab:perf_test_disc}
\begin{tabular}{lcc}
\hline
Model & Deviance & Gini Coefficient \\
\hline
Baseline & 36\,310.68 & 0.5783  \\
Autocalibrated & 33\,882.79  & \textbf{0.7237}\\
Balance correction  & 33\,912.20 & 0.7221 \\
Multicalibrated & \textbf{33\,779.73} & 0.6876  \\
Multibalance correction & 33\,808.72 & 0.6938  \\
\hline
\end{tabular}
\end{table}

\subsection{Continuous sensitive feature}

We now consider the case where the sensitive feature $S$ is continuous. In this setting, multicalibration enforces $\Esp[Y| \pi(\Xvec)=p,\;S=s]=p$, 
for $(p,s)$ in the support of $(\pi(\Xvec),S)$.
In analogy with the categorical case, we present two empirical approaches. The first one implements multicalibration through a direct estimation of the conditional mean surface in the joint $(\pi,S)$ space. The second one enforces multicalibration through a regularized iterative bias-correction procedure.
We work with the same data structure as before. 

\subsubsection{Direct empirical multibalance correction via bivariate local GLM}
\label{subsec:biv_mbc}

When the sensitive feature $S$ is continuous, we estimate the conditional mean surface
\begin{align*}
    m(p,s) = \Esp\!\left[Y | \pi(\Xvec)=p, S=s\right],
\end{align*}
through a local GLM in the joint $(\pi,S)$ space.

To estimate $m(p,s)$ at a given $(p,s)$, we assign to each observation $(N_i,\pi_i,S_i,w_i)$ a weight
\begin{align*}
    \nu_i(p,s) = \nu\!\left(\frac{\pi_i-p}{h_\pi},\frac{S_i-s}{h_S}\right),
\end{align*}
where $\nu(\cdot,\cdot)$ is a symmetric weight function supported on a bounded set and $h_\pi$
and $h_S>0$ are bandwidth parameters controlling locality in each direction.
At $(p,s)$, we fit an intercept-only GLM adapted to the nature of the response, using exposure weights $w_i$. Under the canonical link function $g$, we obtain the local likelihood equation
\begin{align*}
    \sum_{i=1}^n \nu_i(p,s)\,w_i\,Y_i = \sum_{i=1}^n \nu_i(p,s)\,w_i\,\mu(p,s),
\end{align*}
whose solution yields
\begin{align*}
    \widehat m(p,s) = \frac{\sum_{i=1}^n \nu_i(p,s)\,w_i\,Y_i}{\sum_{i=1}^n \nu_i(p,s)\,w_i}.
\end{align*}

The response variable is again the observed claim frequency $Y=N/w$, and estimation is performed using an exposure-weighted local GLM. A natural plug-in estimator replaces $\pi(\Xvec_i)$ with $\widehat m\!\big(\pi(\Xvec_i),S_i\big)$. 
While this estimator targets conditional balance with respect to $(\pi,S)$, it does not generally preserve marginal autocalibration with respect to $\pi$.
To restore marginal consistency, we therefore introduce a centering step. First, we estimate the marginal conditional mean
\begin{align*}
    \widehat m_0(p) = \Esp\!\left[Y | \pi(\Xvec)=p\right]
\end{align*}
using a one-dimensional local GLM in the premium score. We then define
\begin{align*}
    \delta(p,s) = \widehat m(p,s) - \widehat m_0(p),
\end{align*}
and estimate its conditional expectation given $p$ via a one-dimensional local regression. Denoting this estimate by $\widehat{\Esp}[\delta(p,S)| \pi=p]$, we construct the centered multibalance correction
\begin{align*}
    \widehat\pi_{\mathrm{mbc}}(\Xvec_i)
    =
    \widehat m_0\!\big(\pi(\Xvec_i)\big)
    +
    \delta\!\big(\pi(\Xvec_i),S_i\big)
    -
    \widehat{\Esp}\!\left[\delta\!\big(\pi(\Xvec_i),S\big)| \pi=\pi(\Xvec_i)\right].
\end{align*}
This centering guarantees that the average correction at each premium level coincides with the marginal balance adjustment while redistributing residual bias across values of $S$.
By construction, $ \widehat\pi_{\mathrm{mbc}}(\Xvec) \approx \Esp\!\left[Y | \pi(\Xvec),S\right]$,
while preserving approximate autocalibration in the premium dimension.

This procedure constitutes the continuous analogue of the group-wise multibalance correction described in Section~\ref{subsec:iso_mbc}. While it enforces conditional balance in a single smoothing step, its performance depends on the local density of observations in the joint $(\pi,S)$ space. In regions with limited exposure, two-dimensional smoothing may become unstable, which motivates the regulariaed iterative alternative introduced next.

\subsubsection{Regularized empirical enforcement of multicalibration via iterative bias correction}
\label{subsec:iter_cont}

We introduce a regularized empirical procedure that enforces multicalibration sequentially. Rather than estimating the full conditional mean surface in one step, we iteratively remove residual bias in the joint $(\pi,S)$ space, combining one-dimensional and two-dimensional smoothers with exposure-driven shrinkage to stabilize local corrections.

\paragraph{Local residual biases.}

Let $\pi^{(0)}(\Xvec)=\pi(\Xvec)$ denote the baseline premium. For iterations $j=0,1,2,\ldots,J$, define the residual
\begin{align*}
    R_i^{(j)} = Y_i - \pi^{(j)}(\Xvec_i).
\end{align*}
We estimate two smooth conditional mean functions at each iteration:
\begin{align*}
    \widehat b^{(j)}(p) = \Esp\!\left[Y-\pi^{(j)}(\Xvec) | \pi^{(j)}(\Xvec)=p\right],
\end{align*}
\begin{align*}
    \widehat b^{(j)}(p,s) = \Esp\!\left[Y-\pi^{(j)}(\Xvec) | \pi^{(j)}(\Xvec)=p,\; S=s\right].
\end{align*}
The first term corresponds to the marginal premium-level bias and plays the role of the bin-level bias in the discrete algorithm, thereby enforcing autocalibration with respect to the premium. The second captures systematic deviations that remain conditional on both the premium level and the grouping variable. The one-dimensional and two-dimensional smooth regressions are performed in the current premium space $\pi^{(j)}(\Xvec)$.

\paragraph{Exposure-driven shrinkage.}

To stabilize local bias estimates in regions with limited data, we introduce credibility weights in the joint $(\pi,S)$ space. Prior to the iterative procedure, we use a $k$-nearest-neighbor estimator to approximate the local density of observations around each point $(p,s)$ and convert it into a local effective exposure $w_{\mathrm{loc}}(p,s)$. We define the credibility weight as
\begin{align*}
    z(p,s) = \frac{w_{\mathrm{loc}}(p,s)}{w_{\mathrm{loc}}(p,s)+c},
\end{align*}
where $c>0$ is a tuning parameter controlling the amount of shrinkage.

In contrast with the categorical case, where credibility weights depend on iteration-specific cell exposures, the quantity $w_{\mathrm{loc}}(p,s)$ reflects intrinsic data availability in the continuous $(\pi,S)$ space. We therefore compute the weights $z(p,s)$ once and keep them fixed throughout the iterations. This choice ensures that shrinkage captures structural sparsity of the data rather than fluctuations induced by successive premium updates and avoids feedback between correction and regularization.

We define the credibility-weighted conditional correction pointwise as
\begin{align*}
    \delta^{(j)}(p,s) = z(p,s)\Big(\widehat b^{(j)}(p,s) - \widehat b^{(j)}(p)\Big).
\end{align*}
As in Section~\ref{subsec:biv_mbc}, we apply a centering step to preserve autocalibration with respect to the premium level. We therefore define
\begin{align*}
    \widetilde b^{(j)}(p,s) = \widehat b^{(j)}(p) + \delta^{(j)}(p,s) - \Esp\!\left[\delta^{(j)}(p,S)| \pi^{(j)}(\Xvec)=p\right].
\end{align*}
We compute the centering term via a one-dimensional local regression of $\delta^{(j)}(\pi^{(j)}(\Xvec),S)$ on $\pi^{(j)}(\Xvec)$.
\begin{align*}
    \widetilde b^{(j)}(p,s) = \widehat b^{(j)}(p) + \delta^{(j)}(p,s) - \Esp\!\left[\delta^{(j)}(p,S)| \pi^{(j)}(\Xvec)=p\right].
\end{align*}
We estimate the conditional expectation in the last term using the same one-dimensional local smoothing step in the premium score. This guarantees that the correction integrates to $\widehat b^{(j)}(p)$ at each premium level, thereby maintaining marginal autocalibration while redistributing residual bias across values of the continuous grouping variable $S$.

\paragraph{Update step.}

For a step size $\eta\in(0,1]$, we update the premium additively as
\begin{align*}
    \pi^{(j+1)}(\Xvec_i) = \pi^{(j)}(\Xvec_i) + \eta\,\widetilde b^{(j)}\!\big(\pi^{(j)}(\Xvec_i),S_i\big).
\end{align*}
The step-size parameter $\eta$ controls the speed of adjustment, and we choose it to ensure stable convergence of the algorithm.

\paragraph{Stopping criterion.}

At initialization, we construct fixed grids for the premium and the grouping variable using quantile-based partitions of $\pi^{(0)}(\Xvec)$ and $S$. We use these grids exclusively to evaluate convergence. For each cell $(k,\ell)$ of this fixed grid, we compute the exposure-weighted averages $\overline\pi^{(j)}_{k\ell}$ and $\overline b^{(j)}_{k\ell}$ of the current premium and correction. We stop the procedure when
\begin{align*}
    \max_{k,\ell} \frac{\left|\eta\,\overline b^{(j)}_{k\ell}\right|}{\overline\pi^{(j)}_{k\ell}} \le \delta,
\end{align*}
for a user-specified tolerance parameter $\delta>0$.

\paragraph{Interpretation.}

Upon completion of the algorithm, we obtain a premium $\pi^{(J)}(\cdot)$ for which no systematic residual bias remains at the chosen resolution in the joint $(\pi,S)$ space. 

\paragraph{On the role of credibility weights.}

As in the categorical case, we interpret the link with classical B\"uhlmann--Straub credibility in an algebraic rather than structural sense. We do not postulate any latent parameter $\Theta$, and the procedure remains purely cross-sectional and algorithmic.

In the continuous setting, we view
\begin{align*}
    b(p,s) = \Esp\!\left[Y-\pi(\Xvec)| \pi(\Xvec)=p,\; S=s\right]
\end{align*}
as a latent systematic bias surface attached to the current premium. Observations located near $(p,s)$ provide repeated information for estimating this conditional mean residual. When local effective exposure $w_{\mathrm{loc}}(p,s)$ is small, these estimates become unstable, which motivates shrinkage toward the marginal bias.

The credibility weight
$  z(p,s) = \frac{w_{\mathrm{loc}}(p,s)}{w_{\mathrm{loc}}(p,s)+c}$
therefore acts as an exposure-driven regularization device, reducing variance in sparse regions while allowing more granular adjustments where sufficient data are available.
We thus use credibility as a stabilization mechanism rather than as the consequence of a hierarchical stochastic model, and the multicalibration procedure remains a post-processing algorithm enforcing conditional balance.

\subsubsection{Numerical illustrations}

We now illustrate the continuous multicalibration procedures using the same MTPL frequency dataset and data split as in Section~\ref{subsec:illustrdisc}. Here, we consider vehicle age as a continuous grouping variable and set
\begin{align*}
    S = \texttt{VehAge}.
\end{align*}
Vehicle age is not used as a rating variable but only as a calibration dimension.
We obtain baseline premiums $\pi(\cdot)$ from the same Poisson GAM with log-link and exposure offset described previously. These premiums exhibit systematic residual distortions across premium levels and conditional on vehicle age.

Starting from $\pi(\cdot)$, we consider five premium constructions. The first is the unawareness premium itself. The second corresponds to autocalibration obtained as a limiting case of the iterative algorithm when the credibility parameter $c$ is large, enforcing autocalibration
through one-dimensional local regression in the premium score. As a direct non-iterative benchmark targeting the same objective, we also consider balance correction defined by
\begin{align*}
    \pi_{\mathrm{bc}}(\Xvec)=\Esp[Y | \pi(\Xvec)],
\end{align*}
estimated via a one-dimensional local GLM.

We then compare two approaches enforcing conditional balance with respect to the continuous variable $S=\texttt{VehAge}$.
We obtain full multicalibration through the iterative procedure of Section~\ref{subsec:iter_cont}. At each iteration, we estimate the marginal bias $\widehat b^{(j)}(p)$ and the joint bias surface $\widehat b^{(j)}(p,s)$ via local GLMs in the current premium space, apply exposure-driven shrinkage, and centre the correction as described in Section~\ref{subsec:biv_mbc}. 
The local GLMs are estimated using the \texttt{locfit} function in \texttt{R} with neighborhood parameter $\alpha=0.5$, representing the nearest neighbor fraction. We use local linear fits, step size $\eta=0.2$, and stopping threshold $\delta=0.01$.

As a direct non-iterative alternative, we implement multibalance correction via the centered bivariate local GLM of Section~\ref{subsec:biv_mbc}. Specifically, we estimate the conditional mean surface
$  m(p,s)=\Esp[Y | \pi(\Xvec)=p,\; S=s]$
using a two-dimensional local Poisson regression in the joint $(\pi,S)$ space with the same smoothing parameters. We then apply the centering step to preserve marginal autocalibration.

We evaluate residual bias on the validation set using ten premium bins constructed from $\pi(\cdot)$ and for each value of \texttt{VehAge}. For visual clarity, these curves are summarised by a shaded envelope corresponding to the minimum and maximum residual bias across vehicle ages within each premium bin.
Figure~\ref{fig:MCvsMBCcont} displays the residual structure. The baseline premiums show a clear monotone distortion across premium bins, with substantial under-pricing in the highest bins. After autocalibration (iterative or direct balance correction), the global residual curve becomes essentially flat, confirming that both approaches enforce autocalibration, as shown on the top panel. However, dispersion across vehicle age remains visible within each premium bin, indicating persistent conditional bias with respect to $S$, as illustrated by the wider shaded areas in the middle panel of Figure~\ref{fig:MCvsMBCcont} for both approaches.

\begin{figure}[h!]
    \centering
    \includegraphics[width=\textwidth]{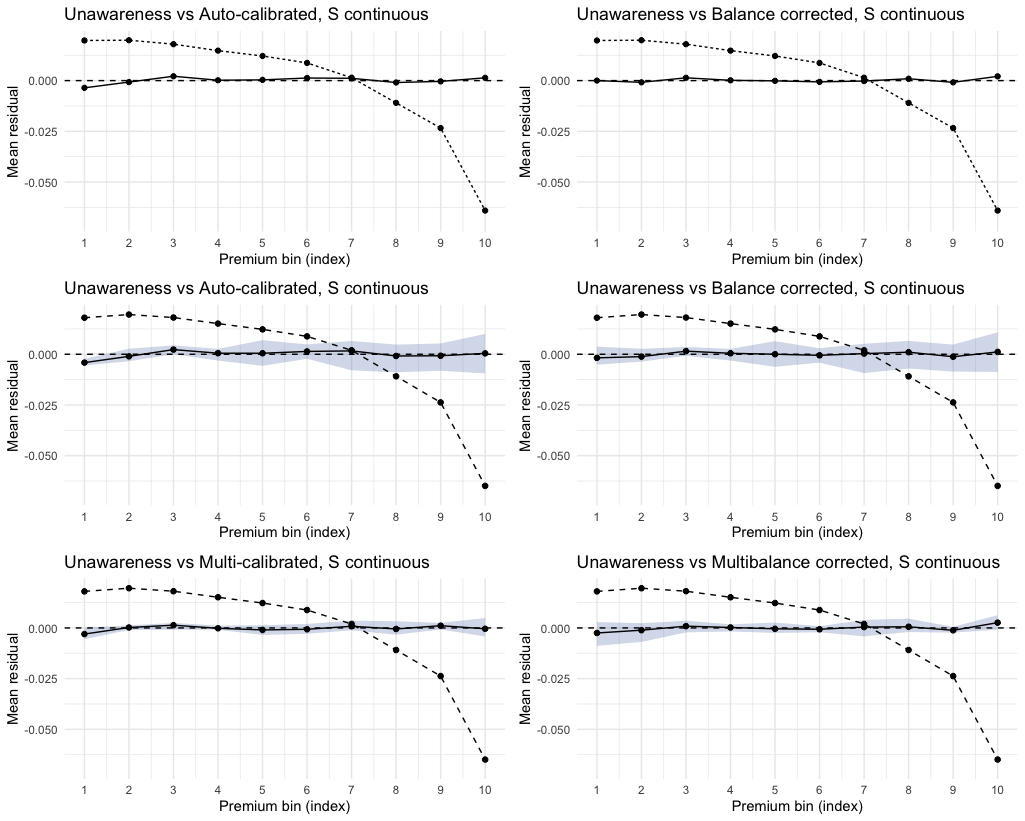}
    \caption{Mean residual pricing bias across premium bins for continuous vehicle-age. Top row: baseline vs.\ autocalibration. Middle row: autocalibration via iterative algorithm (left) and balance correction via isotonic regression (right). Bottom row: multicalibration (iterative) and multibalance correction (groupwise isotonic regression).}
    \label{fig:MCvsMBCcont}
\end{figure}

Applying full multicalibration or multibalance correction substantially reduces this dispersion. The shaded regions in the bottom row narrow considerably relative to autocalibration, demonstrating that both methods effectively remove conditional distortions along the continuous vehicle-age dimension. The iterative multicalibration procedure yields the smoothest residual patterns, reflecting the stabilizing effect of shrinkage and repeated bias updates. The direct multibalance correction achieves comparable alignment in a single step, with minor local variations attributable to finite-sample smoothing in two dimensions.

Table~\ref{tab:perf_test_cont} reports out-of-sample performance on the test set using Poisson deviance and Gini coefficient. The pattern is again consistent with the categorical case discussed in Section~\ref{subsec:illustrdisc}. The baseline model exhibits the highest deviance, reflecting score-level miscalibration. Autocalibration substantially reduces deviance, confirming that correcting premium-level distortions improves model adequacy.

Introducing conditional calibration with respect to continuous vehicle age yields further improvements. The multicalibrated premium achieves the lowest deviance overall, improving upon the autocalibrated specification. The non-iterative balance and multibalance corrections attain very similar but slightly higher deviance values. Overall, the iterative procedures (autocalibration and multicalibration) provide the best adequacy performance in the continuous setting.

In terms of discriminatory power, all calibrated models markedly improve upon the baseline. The Gini coefficients of the calibrated specifications are relatively close. Multibalance correction attains the highest Gini value, followed closely by balance correction and autocalibration, while multicalibration yields a slightly lower value. 

\begin{table}[htbp]
\centering
\caption{Out-of-sample performance comparison on the test set.}
\label{tab:perf_test_cont}
\begin{tabular}{lcc}
\hline
Model & Deviance & Gini Coefficient \\
\hline
Baseline & 36\,310.68 & 0.5783  \\
Autocalibrated & 33\,793.14 & 0.6941  \\
Balance correction & 33\,799.32  & 0.6973  \\
Multicalibrated & \textbf{33\,781.86} & 0.6904 \\
Multibalance correction & 33\,814.61 & \textbf{0.7023}  \\
\hline
\end{tabular}
\end{table}

\color{black}

\section{Discussion}

This paper implemented multicalibration for insurance premiums. This notion is particularly appealing
because it ensures financial equilibrium and guarantees that the expected return of the
insurance operation is identical across groups defined on the basis of the sensitive feature.
Several methods for implementing multicalibration have been proposed and illustrated on a motor
insurance portfolio. Interestingly, multicalibration improves the performances of the candidate
premium whereas imposing fairness generally deteriorates accuracy.

\section*{Acknowledgements}

The first Author
gratefully acknowledges funding from the FWO and F.R.S.-FNRS
under the Excellence of Science (EOS) programme, project ASTeRISK (40007517).
The third Author gratefully acknowledges the support received from the Research Chair ACTIONS under the aegis of the Risk Foundation, an initiative by BNP Paribas Cardif and the French Institute of Actuaries.


\section*{References}

\begin{enumerate}
\item[-]
Barocas, S., Hardt, M., Narayanan, A. (2019).
Fairness and Machine Learning. 
fairmlbook.org.
\item[-]
Baumann, J., Loi, M. (2023). 
Fairness and risk: An ethical argument for a group fairness definition insurers can use. 
Philosophy \& Technology 36, article \# 45.
\item[-]
Charpentier, A. (2024).
Insurance, Biases, Discrimination and Fairness.
Springer.
\item[-]
Ciatto, N., Verelst, H., Trufin, J.,  Denuit, M. (2023).
Does autocalibration improve goodness of lift?
European Actuarial Journal 13, 479-486.
\item[-]
Denuit, M., Charpentier, A., Trufin, J. (2021).
Autocalibration and Tweedie-dominance for insurance pricing with machine learning.
Insurance: Mathematics and Economics 101, 485-497.
\item[-]
Denuit, M., Dhaene, J., Goovaerts, M.J., Kaas, R. (2005).
Actuarial Theory for Dependent Risks: Measures, Orders and Models. 
Wiley, New York.
\item[-]
Denuit, M., Sznajder, D., Trufin, J. (2019).
Model selection based on Lorenz and concentration curves, Gini indices and convex order.
Insurance: Mathematics and Economics 89, 128-139.

\item[-]
Denuit, M., Trufin, J. (2023).
Model selection with Pearson's correlation, concentration and Lorenz curves under autocalibration.
European Actuarial Journal 13, 871-878.
\item[-]
Denuit, M., Trufin, J. (2024).
Convex and Lorenz orders under balance correction in nonlife insurance pricing:
Review and new developments.
Insurance: Mathematics and Economics 118, 123-128.
\item[-]
Dwork, C., Hardt, M., Pitassi, T., Reingold, O., Zemel, R. (2012). 
Fairness through awareness. 
Proceedings Innovations in Theoretical Computer Science Conference (ITCS 2012), pp. 214–226.
\item[-]
Fissler, T., Lorentzen, C., Mayer, M. (2022). 
Model comparison and calibration assessment: User guide for consistent scoring functions in machine learning and actuarial practice. 
arXiv preprint arXiv:2202.12780.
\item[-]
Frees E. W., Huang F. (2023).
The discriminating (pricing) actuary.
North American Actuarial Journal 27, 2-24.
\item[-]
Fr\"ohlich, C., Williamson, R.C. (2024). 
Insights from insurance for fair machine learning.
Proceedings of the 2024 ACM Conference on Fairness, Accountability, and Transparency, pp. 407-421.
\item[-]
Hebert-Johnson, U., Kim, M., Reingold, O., Rothblum, G. (2018). 
Multicalibration: Calibration for the (computationally-identifiable) masses.
Proceedings of International Conference on Machine Learning, pp. 1939-1948.
\item[-]
Kr\"uger, F., Ziegel, J.F. (2021).
Generic conditions for forecast dominance.
Journal of Business \& Economic Statistics 39, 972-983.
\item[-]
Kusner, M. J., Loftus, J., Russell, C., Silva, R. (2017).
Counterfactual Fairness. In Guyon, I., Luxburg,
U. V., Bengio, S., Wallach, H., Fergus, R., Vishwanathan, S., Garnett, R., (Eds.), Advances in
Neural Information Processing Systems, vol. 30.
\item[-]
Lindholm, M., Lindskog, F., Palmquist, J. (2023). 
Local bias adjustment, duration-weighted probabilities, and automatic construction of tariff cells. 
Scandinavian Actuarial Journal 2023, 946-973.
\item[-]
Lindholm, M., Richman, R., Tsanakas A., Wüthrich, M.V. (2022). 
Discrimination-free insurance pricing.
Astin Bulletin 52, 55-89.
\item[-]
Lundborg, A. R. (2022). 
Modern methods for variable significance testing.
PhD thesis, Apollo - University of Cambridge Repository.
\item[-]
Noll, A., Salzmann, R., W\"uthrich, M. (2018). Case study: French motor third-party liability
claims. Available at SSRN https://ssrn.com/abstract=3164764.
\item[-]
Scantamburlo, T., Baumann, J., Heitz, C. (2025). 
On prediction-modelers and decision-makers: Why fairness requires more than a fair prediction model. 
AI \& Society 40, 353-369.
\item[-]
W\"uthrich, M. V. (2023). 
Model selection with Gini indices under auto-calibration. 
European Actuarial Journal 13, 469-477.

\item[-]
W\"uthrich, M. V., Ziegel, J. (2024). 
Isotonic recalibration under a low signal-to-noise ratio.
Scandinavian Actuarial Journal 2024(3), 279-299

\item[-]
Zhang, Y., Huang, L., Yang, Y., Shao, X. (2025).
Testing conditional mean independence using generative neural networks. 
arXiv preprint arXiv:2501.17345.

\end{enumerate}

\end{document}